\newtheorem{theorem}{Theorem}[section]
\newtheorem{proposition}[theorem]{Proposition}
\providecommand{\keywords}[1]{\textbf{\textit{Index terms---}} #1}
\begin{document}

\title{Sequence Covering Similarity for Symbolic Sequence Comparison}

\author{Pierre-Francois Marteau\\
IRISA, Universite Bretagne Sud}




\maketitle

\keywords{Sequence Covering Similarity, Symbolic Sequence Matching, Similarity, Sequence Mining, String Matching.}

\begin{abstract}
This paper introduces the sequence covering similarity, that we formally define for evaluating the similarity between a symbolic sequence (string) and a set of symbolic sequences (strings). From this covering similarity we derive a pair-wise distance to compare two symbolic sequences.  
We show that this covering distance is a semimetric. Few examples are given to show how this string semimetric in $O(n \cdot log n)$ compares with the Levenshtein's distance that is in $O(n^2)$. A final example presents its application to plagiarism detection. 
\end{abstract}


%

\section{Introduction}\label{sec:introduction}
Estimating efficiently the similarity between symbolic sequences is a recurrent task in various application domains, in particular in bio-informatics, text processing or computer or network security. Numerous similarity measures have been defined to cope with symbolic sequences such the edit distance and its implementation proposed by Wagner and Fisher \cite{Wagner1974}, BLAST \cite{Korf2003}, the Smith and Waterman or Levenshtein \cite{SMITH1981, Levenshtein66} and the  Needleman Wunch \cite{NEEDLEMAN1970} distances or the local sequence kernels \cite{Vert2004}.   

We present in this paper a new approach to characterize similarity between sequences by introducing the notion of sequence covering. Basically, this similarity is based on a set of reference sequences which defines a dictionary of subsequences that are used to 'optimally' cover any sequence. Originally this sequence covering principle has been introduced in the context of Host Intrusion Detection \cite{Marteau2017}. We derive hereinafter a pairwise similarity measure and show that this measure is a semimetric on the set of strings. We finally highlights through some examples the utility of this measure.

\section{The Sequence Covering Similarity}

\begin{figure}[ht!]
\centering
\includegraphics[scale=.7]{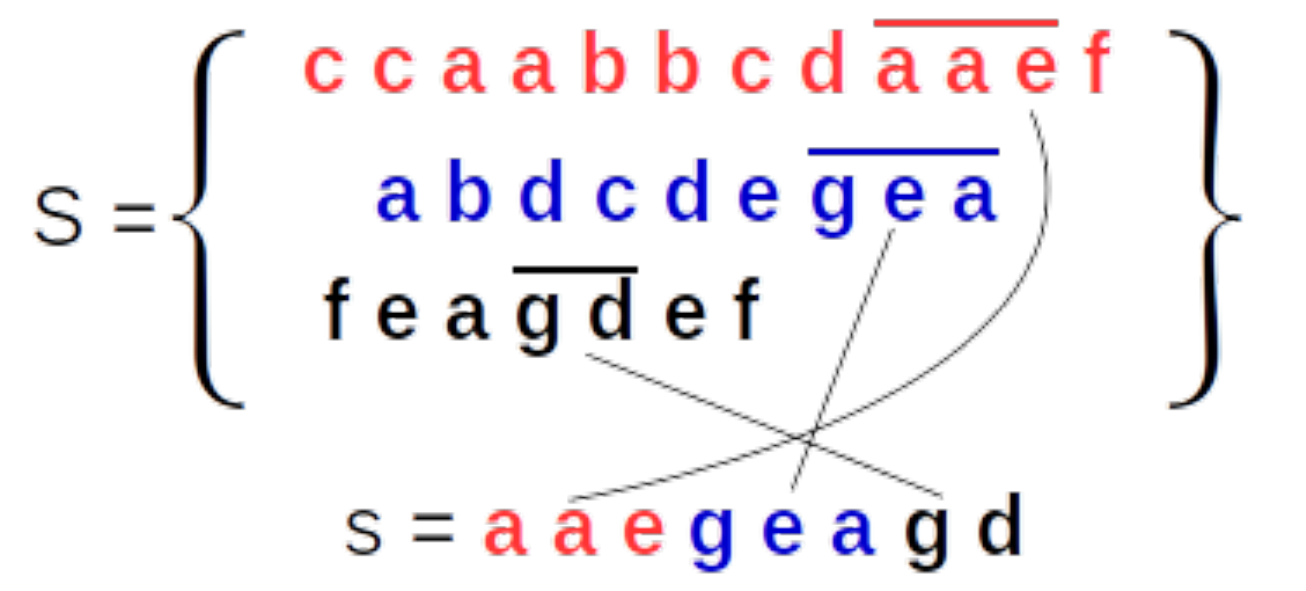}
\caption{Example of the covering of a sequence ($s$) using subsequences of sequences in a set ($S$).}
\label{fig:principle}
\end{figure}

The notion of sequence covering is simple and depicted in Fig. \ref{fig:principle}. The sequence $s$ is \textit{covered} by subsequences of the sequences that belong to set $S$. On this example, the covering is \textit{optimal} in the sense that it is composed with a minimal number of subsequences. It is \textit{total} in the sense that all the elements of $s$ are \textit{covered}. 

The sequence covering similarity between $s$ and set $S$ relates the size (in number of subsequences) of the \textit{optimal} covering of $s$ using sequences of $S$, to the size of $s$ (in number of elements) itself, $|s|$, such that it is maximum equal to one if the covering is of size 1, and minimal equal to $1/|s|$ if the covering is composed with subsequences of size 1.

We define precisely these notions in the following subsection.

\subsection{Definitions and notation}

Let $\Sigma$ be a finite alphabet and let $\Sigma^*$ be the set of all sequences (or strings) define over $\Sigma$. We note $\epsilon$ the empty sequence.

Let $S \subset \Sigma^*$ be any set of sequences, and let $S_{sub}$ be the set of all subsequences that can be extracted from any element of $S\cup \Sigma$. We denote by $\mathscr{M}(S_{sub})$ the set of all the multisets\footnote{A multiset is a collection of elements in which elements are allowed to repeat; it may contain a finite number of indistinguishable copies of a particular element.} that we can compose from the elements of $S_{sub}$.

$c \in \mathscr{M}(S_{sub})$ is called a partial covering of sequence $s \in \Sigma^*$ iif 
\begin{enumerate}
\item all the subsequences of $c$ are also subsequences of $s$,
\item indistinguishable copies of a particular element in $c$ correspond to distinct occurrences of the same subsequence in $s$.
\end{enumerate}

If $c \in \mathscr{M}(S_{sub})$ entirely covers $s$, meaning that we can find an arrangement of the elements of $c$ that covers entirely $s$, then we will call it a full covering for $s$.


Finally, we call a $S$-optimal covering of $s$ any full covering of $s$ which is composed with a minimal number of subsequences in $S_{sub}$.

Let $c^*_S(s)$ be a $S$-optimal covering of  $s$.

We define the covering similarity measure between any non empty sequence $s$ and any set $S \subset \Sigma^*$ as
\begin{equation}
\mathscr{S}(s,S) = \frac{|s|-|c^*_S(s)|+1}{|s|} 
\label{eq:coveringSimilarity}
\end{equation}
where $|c^*_S(s)|$ is the number of subsequences composing a $S$-optimal covering of $s$, and $|s|$ is the length of sequence $s$.\\

Note that in general $c^*_S(s)$  is not unique, but since all such coverings have the same cardinality, $|c^*_S(s)|$, $\mathscr{S}(s,S)$ is well defined. 

Properties of $\mathscr{S}(s,S)$:
\begin{enumerate}
\item If $s$ is a non empty subsequence in $S_{sub}$, then  $\mathscr{S}(s,S) = 1$ is maximal.
\item In the worse case, the $S$-optimal covering of $s$ has a cardinality equal to $|s|$, meaning that it is composed only with subsequences of length $1$. In that case, $\mathscr{S}(s,S) = \frac{1}{|s|}$ is minimal.
\item If $s$ is non empty, $\mathscr{S}(s,\emptyset) = \frac{1}{|s|}$ (notice that if $S=\emptyset$, $S_{sub}=\Sigma$).
\end{enumerate} 

Furthermore, as $\epsilon$ is a subsequence of any sequence in $\Sigma^*$, we define, for any set $S \subset \Sigma^*$, $\mathscr{S}(\epsilon,S) = 1.0$

As an example, let us consider the following case:

\begin{flalign*} 
&s_1 = [0,0,0,0,1,1,1,1,0,0,0,0,1,1,1,1] \\
&s_2 = [0,0,0,0,0,0,0,0,1,1,1,1,1,1,1,1]\\
&S= \{s_1, s_2\}\\
&s_3 = [0,0,1,1,0,0,1,1,0,0,1,1,0,0,1,1]\\
&s_4 = [0,1,0,1,0,1,0,1,0,1,0,1,0,1,0,1]\\
\end{flalign*} 

The $S$-optimal covering of $s_3$ \footnote{([0,0,1,1][0,0,1,1],[0,0,1,1][0,0,1,1]) is a $S$-optimal covering of $s_3$} is of size $4$, hence $\mathscr{S}(s_3,S) = \frac{16-4+1}{16}=13/16$, and the $S$-optimal covering of $s_4$ \footnote{([0,1],[0,1],[0,1],[0,1],[0,1],[0,1],[0,1],[0,1]) is a  $S$-optimal covering for $s_4$} is of size $8$, leading to $\mathscr{S}(s_4,S) = \frac{16-8+1}{16}=9/16$.\\


\subsection{Finding a $S$-optimal covering for any tuple $(s,S)$}

The brute-force approach to find a $S$-optimal covering for a sequence $s$ is presented in algorithm \ref{alg:Find-S-optimal}. It is an incremental algorithm that, first, finds the longest subsequence of $s$ that is contained in $S_{sub}$ and that starts at the beginning of $s$. This first subsequence is the first element of the $S$-optimal covering. Then, it searches for the following longest subsequence that is in $S_{sub}$ and that starts at the end of the first element of the covering, adds it to the covering in construction, and iterate until reaching the end of sequence $s$.  

\begin{algorithm}
\SetKwData{Left}{left}\SetKwData{This}{this}\SetKwData{Up}{up}
\SetKwFunction{Union}{Union}\SetKwFunction{FindCompress}{FindCompress}
\SetKwInOut{Input}{input}\SetKwInOut{Output}{output}
\Input{$S \subset \Sigma^*$, a set of sequences}
\Input{$s \in \Sigma^*$, a test sequence }
\Output{$c$, a ($S$-optimal) covering for $s$}
\BlankLine
$continue \longleftarrow True$\;
$start \longleftarrow 0$\;
$c^* \longleftarrow \emptyset$\;
\While{continue}{
  $end \longleftarrow start + 1$\;
  \While{$end<|s|$ and $s[start:end] \in S_{sub}$}{
  $end \longleftarrow end + 1$\;
  }
  $c \longleftarrow c^* \cup \{s[start:end-1]\}$\;
  \lIf{$end = |s|$}{$continue \longleftarrow False$}
  $start \longleftarrow end $\;
}
\Return $c$\;
\BlankLine
\caption{Find a $S$-optimal covering for $s$\label{alg:Find-S-optimal}}
\end{algorithm}

\begin{proposition} Algorithm \ref{alg:Find-S-optimal} outputs a $S$-optimal covering for sequence $s$.
\end{proposition}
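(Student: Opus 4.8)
The plan is to first recast the statement combinatorially, then verify that the algorithm returns a genuine full covering, and finally prove minimality by a ``greedy stays ahead'' induction.

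First I would observe that, because the algorithm manipulates $s$ through the slices $s[\mathit{start}{:}\mathit{end}]$, a full covering of $s$ is exactly a factorization $s = s[0{:}p_1]\,s[p_1{:}p_2]\cdots s[p_{k-1}{:}|s|]$ into consecutive blocks, each block being an element of $S_{sub}$ (in the contiguous sense of the slicing); condition~2 of the definition of a partial covering is automatic here since distinct blocks occupy distinct positions of $s$. Hence $|c^*_S(s)|$ is the minimum possible number of blocks in such a factorization, and the proposition reduces to: \emph{the greedy longest-admissible-prefix factorization produced by Algorithm~\ref{alg:Find-S-optimal} has minimum length.} As a preliminary, I would note that the algorithm does return a full covering: since $S=\emptyset$ gives $S_{sub}=\Sigma$ and, in general, $\Sigma\subseteq S_{sub}$, every single-symbol block $s[\mathit{start}{:}\mathit{start}+1]$ lies in $S_{sub}$, so each iteration appends a block of length at least one, $\mathit{start}$ strictly increases, the outer loop terminates after at most $|s|$ iterations, and the concatenation of the appended blocks is exactly $s$.

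The heart of the proof is the structural fact that $S_{sub}$ is closed under taking factors: any contiguous factor of a factor of an element of $S\cup\Sigma$ is again such a factor, so in particular $S_{sub}$ is \emph{prefix-closed}. Write the greedy output with cut points $0=p_0<p_1<\cdots<p_k=|s|$, and let $c$ be an arbitrary full covering of $s$ with cut points $0=q_0<q_1<\cdots<q_m=|s|$. I would prove by induction on $i$ that $p_i\ge q_i$ for all $0\le i\le\min(k,m)$. The base case is $p_0=q_0=0$. For the inductive step, assume $p_{i-1}\ge q_{i-1}$. If already $p_{i-1}\ge q_i$ then $p_i>p_{i-1}\ge q_i$. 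Otherwise $q_{i-1}\le p_{i-1}<q_i$, so $s[p_{i-1}{:}q_i]$ is a factor of the block $s[q_{i-1}{:}q_i]\in S_{sub}$, hence $s[p_{i-1}{:}q_i]\in S_{sub}$; since the inner while-loop, started at position $p_{i-1}$, keeps extending the current block as long as it stays in $S_{sub}$, it reaches at least position $q_i$, i.e.\ $p_i\ge q_i$. Finally, were $k>m$, applying the induction up to $i=m$ would give $p_m\ge q_m=|s|$, so the greedy procedure has already exhausted $s$ after $m$ blocks and must stop there, contradicting $k>m$. Thus $k\le m$ for every full covering $c$, so the greedy output attains the minimum $|c^*_S(s)|$ and is a $S$-optimal covering.

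I expect the only delicate points to be routine bookkeeping rather than conceptual: getting the off-by-one in the inner loop consistent with the claim that greedy picks the \emph{longest} admissible prefix (the $\mathit{end}<|s|$ guard together with the appended $s[\mathit{start}{:}\mathit{end}-1]$), and organizing the case split in the inductive step so that $s[p_{i-1}{:}q_i]$ is well defined before prefix-closedness is invoked. The single genuine idea is the interplay between the greedy rule and factor-closedness of $S_{sub}$, which is what makes the ``stays ahead'' induction go through.
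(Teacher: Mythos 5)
Your proof is correct, and it takes a genuinely different route from the paper's. The paper argues by induction on the size $n$ of a $S$-optimal covering: it first notes that the output is a full covering, then invokes (without proof) the monotonicity lemma that $|c^*_S(s_1)|\le|c^*_S(s_2)|$ whenever $s_1$ is a (contiguous) subsequence of $s_2$, and finally observes that the optimal covering's first block must be a prefix of the greedy first block, so the remaining suffix left by the greedy step admits a covering of size at most $n$ and the induction hypothesis applies to it. You instead run a ``greedy stays ahead'' induction on the index of the cut points, comparing the greedy factorization against an \emph{arbitrary} full covering, with the key lemma being that $S_{sub}$ is closed under taking factors (in particular prefix-closed), so the slice $s[p_{i-1}{:}q_i]$ inherited from the competitor's block is admissible and the inner loop reaches at least $q_i$. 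The two arguments hinge on the same greedy-maximality fact, but your decomposition buys a couple of things: you do not need the paper's unproved monotonicity lemma ii) (factor-closedness of $S_{sub}$ is immediate from its definition), and you prove a slightly stronger invariant ($p_i\ge q_i$ against every full covering, not just an optimal one), which makes the final counting step a one-liner. The paper's version, conversely, is shorter to state because it leans on the subsequence-monotonicity of $|c^*_S(\cdot)|$ as a black box. Both share the same implicit reading of a full covering as a concatenation/factorization of $s$ into blocks from $S_{sub}$ (the paper's own proof decomposes $s=s^*_1+\overline{s}^*_1$ in exactly this way), and your remark that single symbols always lie in $S_{sub}$ correctly disposes of termination and fullness, matching the paper's point i).
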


\begin{proof} i) First we notice that since all the subsequences of length $1$ constructed on $\Sigma$ are included into $S_{sub}$, algorithm \ref{alg:Find-S-optimal}, by construction, necessarily outputs a full covering of $s$ (meaning that $s$ is entirely covered by the subsequences of the covering provided the algorithm).  

ii) Second we notice that, for all $s_1$ and $s_2$ in $\Sigma^*$ such that $s_1$ is a subsequence of $s_2$, and any $S \subset \Sigma^*$, $|c^*_S(s_1)| \le |c^*_S(s_2)|$.

We finalize the proof by induction on $n$, the cardinality (the size) of the coverings. 

The proposition is obviously true for $n=1$: for all sequence $s$ for which a covering of size $1$ exists (meaning that $s$ is a subsequence of one of the sequences  in $S$), algorithm \ref{alg:Find-S-optimal} finds the S-optimal covering that consists of $s$ itself.

Then, assuming that the proposition holds for  $n$, such that $n \ge 1$ (IH),  we consider a sequence $s$ that admits a $S$-optimal covering of size $n+1$.

Let $s=s_1+\overline{s}_1$, be the decomposition of $s$ according to the full covering provided by algorithm \ref{alg:Find-S-optimal}, where $s_1$ is the prefix of the covering (first element) and $\overline{s}_1$ the remaining suffix subsequence (concatenation of the remaining covering elements). $+$ is the sequence concatenation operator. Similarly, let $s=s^*_1+\overline{s}^*_1$, be the decomposition of $s$ according to a $S$-optimal covering of $s$. Necessarily, $s^*_1$, which is also a prefix of $s$, is a subsequence of $s_1$ (otherwise, since $s^*_1$ is in $S_{sub}$, algorithm \ref{alg:Find-S-optimal} would have increased the length of $s_1$ at least to the length of $s^*_1$). Hence, $\overline{s}_1$ is a subsequence of $\overline{s}^*_1$ and, according to ii), $|c^*_S(\overline{s}_1)| \le |c^*_S(\overline{s}^*_1)|=n$. This shows that $\overline{s}_1$ is a sequence that admits a $S$-optimal covering, $c^*_S(\overline{s}_1)$,  of size at most equal to $n$. According to (HI), algorithm \ref{alg:Find-S-optimal} returns such an optimal covering for  $\overline{s}_1$. This shows that the covering $\{s_1\} \cup c^*_S(\overline{s}_1)$ that is returned by algorithm \ref{alg:Find-S-optimal} for the full sequence $s$, is at most of size $n+1$, meaning that it is actually a  $S$-optimal covering for $s$ of size $n+1$. Hence, by induction, the proposition is true for all $n$, which proves the proposition.
\end{proof}

\subsubsection{Other property}

\begin{proposition}
By definition of the $S$-optimal covering of a sequence, it is easy to show that \\ 

For all $S \subset \Sigma^*$, all $A \subset S$ and all $s \in \Sigma^*$, $|c^*_S(s)| \leq |c^*_A(s)|$, leading to $\mathscr{S}(s,S) \ge \mathscr{S}(s,A)$.
\end{proposition}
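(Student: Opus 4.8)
The plan is to reduce the cardinality inequality $|c^*_S(s)| \le |c^*_A(s)|$ to the elementary set inclusion $A_{sub} \subseteq S_{sub}$, and then to read off the similarity inequality from the fact that $\mathscr{S}(s,\cdot)$ depends on the reference set only through the quantity $|c^*_S(s)|$, and does so in a decreasing fashion. First I would dispose of the degenerate case $s=\epsilon$: by the stated convention $\mathscr{S}(\epsilon,S)=\mathscr{S}(\epsilon,A)=1.0$, so the claim holds with equality; from here on I would assume $s$ non-empty, so that Eq.~(\ref{eq:coveringSimilarity}) applies.

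Next I would establish $A_{sub} \subseteq S_{sub}$. Since $A \subseteq S$ we have $A \cup \Sigma \subseteq S \cup \Sigma$, and any subsequence that can be extracted from an element of $A \cup \Sigma$ can in particular be extracted from an element of $S \cup \Sigma$; hence $A_{sub} \subseteq S_{sub}$, and consequently $\mathscr{M}(A_{sub}) \subseteq \mathscr{M}(S_{sub})$. The key step then is to observe that ``being a partial (resp. full) covering of $s$'' is a property that is \emph{monotone} in the reference set: the two conditions in the definition of a partial covering of $s$ — that every subsequence occurring in $c$ be a subsequence of $s$, and that indistinguishable copies in $c$ match distinct occurrences in $s$ — mention only $s$, not the reference set; and ``full'' is likewise a statement about arranging the elements of $c$ to cover $s$ entirely. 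The reference set enters solely through the membership requirement $c \in \mathscr{M}(\cdot_{sub})$. Therefore, applying this with $c = c^*_A(s)$, which lies in $\mathscr{M}(A_{sub}) \subseteq \mathscr{M}(S_{sub})$, we obtain a full $S$-covering of $s$ of size $|c^*_A(s)|$; since $|c^*_S(s)|$ is by definition the minimum size of a full $S$-covering of $s$, this yields $|c^*_S(s)| \le |c^*_A(s)|$.

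Finally I would feed this into Eq.~(\ref{eq:coveringSimilarity}): for non-empty $s$ the length $|s|$ is fixed and $x \mapsto (|s|-x+1)/|s|$ is decreasing, so
\[
\mathscr{S}(s,S) = \frac{|s|-|c^*_S(s)|+1}{|s|} \;\ge\; \frac{|s|-|c^*_A(s)|+1}{|s|} = \mathscr{S}(s,A),
\]
which is the asserted inequality.

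I do not expect a genuine obstacle here: the argument is essentially bookkeeping around the definitions. The one point that deserves care — and that I would state explicitly rather than wave away — is the monotonicity step: one must re-read the definition of a (partial/full) covering of $s$ and notice that enlarging the reference set can only enlarge the pool of admissible covering elements, never invalidate an existing covering, so an optimal covering over the smaller dictionary $A_{sub}$ is always available (though possibly no longer optimal) over the larger dictionary $S_{sub}$.
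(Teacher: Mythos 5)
Your argument is correct and is exactly the routine definitional bookkeeping the paper gestures at (it gives no explicit proof, merely asserting the claim is easy from the definition): $A\subseteq S$ gives $A_{sub}\subseteq S_{sub}$, so any full $A$-covering is a full $S$-covering, whence $|c^*_S(s)|\le|c^*_A(s)|$, and the similarity inequality follows since $\mathscr{S}(s,\cdot)$ is decreasing in the covering size. Your explicit treatment of the monotonicity step and of the $s=\epsilon$ convention is a welcome completion of what the paper leaves implicit.
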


\subsection{Pairwise similarity and pairwise distance for comparing pairs of symbolic sequences (strings)}
The covering similarity between a sequence and a set of sequences as defined in Eq. \ref{eq:coveringSimilarity} allows for the definition of a covering similarity measure on the sequence set, $\Sigma^*$, itself. For any pair of non empty sequences $s_1$, $s_2 \in \Sigma^*$ we define it as follows 
\begin{equation}
\mathscr{S}_{seq}(s_1,s_2) = \frac{1}{2} (\mathscr{S}(s_1,\{s_2\})+ \mathscr{S}(s_2,\{s_1\})  )
\label{eq:SeqCoveringSimilarity}
\end{equation}

where $\mathscr{S}$ is defined in Eq. \ref{eq:coveringSimilarity}.

Then, we define $\mathscr{S}_{seq}(\epsilon,\epsilon) = 1.0$, and for any non empty $s \in 
\Sigma^*$, we get that 
$\mathscr{S}_{seq}(\epsilon, s) = \mathscr{S}_{seq}(s, \epsilon) = \frac{1}{2} (1+\frac{1}{|s|+1})$\\

Finally we define straightforwardly $\delta_{c}$ a pairwise distance on $\Sigma^*$ as
\begin{align}
\delta_{c}(s_1,s_2) = 1 - \mathscr{S}_{seq}(s_1,s_2) 
\label{eq:SeqCoveringDistance}
\end{align}

Leading to 
\begin{align}
\label{eq:epsilonDistance1}
\delta_{c}(\epsilon,\epsilon) = 0 \texttt{ and,} \\ 
\texttt{for any non empty } s \in \Sigma^* \texttt{,  } \delta_{c}(\epsilon,s) = \delta_{c}(s, \epsilon) = \frac{1}{2} (1 - \frac{1}{|s|+1})  \nonumber
\end{align}

\begin{proposition} $\delta_c(.,.)$ is a semimetric on $\Sigma^*$
\end{proposition}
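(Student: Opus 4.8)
To show $\delta_c$ is a semimetric on $\Sigma^*$, I need to establish three things: (i) non-negativity, $\delta_c(s_1,s_2)\ge 0$; (ii) the identity of indiscernibles, $\delta_c(s_1,s_2)=0 \iff s_1=s_2$; and (iii) symmetry, $\delta_c(s_1,s_2)=\delta_c(s_2,s_1)$. Crucially, a semimetric does \emph{not} require the triangle inequality, so I will not attempt to prove it — indeed the title's remark about $O(n\log n)$ versus Levenshtein suggests the authors are aware it may fail, and presumably will give a counterexample elsewhere. Symmetry (iii) is immediate from the definition in Eq.~\eqref{eq:SeqCoveringSimilarity}, since $\mathscr{S}_{seq}$ is visibly symmetric in its two arguments (it is an average of the two cross-terms), and the boundary cases involving $\epsilon$ in Eq.~\eqref{eq:epsilonDistance1} are also manifestly symmetric; hence $\delta_c = 1 - \mathscr{S}_{seq}$ inherits symmetry.

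For non-negativity (i), I would argue that $\mathscr{S}_{seq}(s_1,s_2)\le 1$ always, so that $\delta_c = 1-\mathscr{S}_{seq}\ge 0$. This reduces to showing $\mathscr{S}(s,\{t\})\le 1$ for all non-empty $s,t$, which is exactly Property~1 of $\mathscr{S}$ stated after Eq.~\eqref{eq:coveringSimilarity}: the $S$-optimal covering has at least one element, so $|c^*_S(s)|\ge 1$ and thus $\mathscr{S}(s,S) = (|s|-|c^*_S(s)|+1)/|s| \le |s|/|s| = 1$. Averaging two quantities each $\le 1$ keeps the bound, and the $\epsilon$-boundary values $\frac12(1-\frac1{|s|+1})$ and $0$ are clearly in $[0,1)$, so (i) holds.

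The only real content is the identity of indiscernibles (ii). The ``$\Leftarrow$'' direction: if $s_1=s_2=s$ non-empty, then $s$ is a subsequence of $s$, so by Property~1 both $\mathscr{S}(s,\{s\})=1$, giving $\mathscr{S}_{seq}(s,s)=1$ and $\delta_c(s,s)=0$; the case $s_1=s_2=\epsilon$ is the definition $\delta_c(\epsilon,\epsilon)=0$. The ``$\Rightarrow$'' direction is the part requiring care: I must show that $\delta_c(s_1,s_2)=0$ forces $s_1=s_2$. Since $\delta_c=0$ means $\mathscr{S}_{seq}(s_1,s_2)=1$ and this is an average of two terms each at most $1$, both terms must equal $1$: $\mathscr{S}(s_1,\{s_2\}) = \mathscr{S}(s_2,\{s_1\}) = 1$. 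First I rule out the mixed $\epsilon$ cases: if exactly one of them is $\epsilon$, Eq.~\eqref{eq:epsilonDistance1} gives $\delta_c = \frac12(1-\frac1{|s|+1}) > 0$, contradiction. So either both are $\epsilon$ (done), or both are non-empty. In the non-empty case, $\mathscr{S}(s_1,\{s_2\})=1$ means $|c^*_{\{s_2\}}(s_1)| = 1$, i.e.\ $s_1$ is a subsequence of an element of $\{s_2\}\cup\Sigma$; since $s_1$ is non-empty of length possibly $>1$ it must be a (contiguous) subsequence of $s_2$, hence $|s_1|\le |s_2|$. Symmetrically $|s_2|\le|s_1|$. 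Therefore $|s_1|=|s_2|$, and a subsequence of $s_2$ having the same length as $s_2$ must equal $s_2$, so $s_1=s_2$. The main obstacle — really the only subtle point — is being careful about the definition of ``subsequence'' here (the paper uses contiguous substrings, as the notation $s[start:end]$ and the covering picture indicate) and about the degenerate length-one cases where a single symbol is trivially in $S_{sub}=S_{sub}\supseteq\Sigma$; once one fixes that ``$|c^*_{\{t\}}(s)|=1$'' means ``$s$ is a contiguous substring of $t$ (or $|s|=1$)'', the equal-length squeeze closes the argument cleanly.
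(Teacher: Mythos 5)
Your proposal follows the paper's own proof essentially step for step: non-negativity from the bounds on $\mathscr{S}$, symmetry by construction of $\mathscr{S}_{seq}$, and identity of indiscernibles by observing that $\delta_c(s_1,s_2)=0$ forces $\mathscr{S}(s_1,\{s_2\})=\mathscr{S}(s_2,\{s_1\})=1$, hence that each string is a contiguous subsequence of the other; your explicit length squeeze merely spells out what the paper leaves implicit, and your treatment of the $\epsilon$ cases matches Eq.~\ref{eq:epsilonDistance1}.

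The one point where you go beyond the paper --- the parenthetical escape ``or $|s|=1$'' --- is precisely where the argument does not close, and your claim that the equal-length squeeze handles it is not right. Since $S_{sub}$ is built from $S\cup\Sigma$, every single symbol of $\Sigma$ belongs to $S_{sub}$ whatever $S$ is (this is exactly why $\mathscr{S}(s,\emptyset)=1/|s|$ in property~3). So for two \emph{distinct} symbols $a\neq b$ one has $|c^*_{\{b\}}(a)|=|c^*_{\{a\}}(b)|=1$, hence $\mathscr{S}(a,\{b\})=\mathscr{S}(b,\{a\})=1$ and $\delta_c(a,b)=0$ although $a\neq b$: knowing $|s_1|=|s_2|=1$ does not give $s_1=s_2$, because in this degenerate case $\mathscr{S}(s_1,\{s_2\})=1$ no longer implies that $s_1$ is a subsequence of $s_2$. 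To be fair, the paper's proof has the same hole --- it silently infers ``$\mathscr{S}(s_1,\{s_2\})=1$ implies $s_1$ is a subsequence of $s_2$'', which fails when $|s_1|=1$ --- so the identity of indiscernibles, and with it the proposition as literally stated, holds only if length-one strings are excluded or if $S_{sub}$ is redefined without the $\cup\,\Sigma$ for this purpose. Apart from this corner case, which you inherited from (and share with) the paper, your argument for strings of length at least two is correct and is the same as the paper's.
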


\begin{proof} 
\begin{flushleft}
It is easy to verify that $\delta_c$ is \textbf{non negative}: for all $s \in \Sigma^*$, and all $S \subset \Sigma^*$, $\mathscr{S}(s,S) \in [\frac{1}{|s|+1}; 1]$. Hence, for all $s_1$, $s_2 \in \Sigma^*$, $\delta_{c}(s_1,s_2) \in [\frac{1}{|2|}\cdot(\frac{1}{|s_1|+1}+\frac{1}{|s_2|+1}); 1]$, and, according to Eq. and Eq. \ref{eq:SeqCoveringDistance} \ref{eq:epsilonDistance1},  for all  $s_1$, $s_2 \in \Sigma^*$, $\mathscr{S}_{seq}(s_1,s_2) \in [0; 1]$.\\
\end{flushleft}

\begin{flushleft}
\textbf{identity of indiscernibles}: First, for all  $s_1$, $s_2 \in \Sigma^*$, if $s_1 = s_2$, then $\mathscr{S}(s_1,\{s_1\})=1$ hence $\delta_{c}(s_1,s_2)=0$.\\
Conversely, for all  $s_1$, $s_2 \in \Sigma^*$ s.t. $\delta_{c}(s_1,s_2)=0$, 
\begin{itemize}
\item if $s_1=\epsilon$, then necessarily $s_2=\epsilon$, otherwise $|s_2|>0$ and  $\delta_{c}(\epsilon,s_2)=\frac{1}{2}(1-\frac{1}{|s_2|+1}) > 0$
\item If if $s_1 \neq\epsilon$, then necessarily  $s_2 \neq\epsilon$ and, since \\
$\delta_{c}(s_1,s_2)=1 - \frac{1}{2}(\mathscr{S}(s_1,\{s_2\})+ \mathscr{S}(s_2,\{s_1\}))=0$, necessarily  $\mathscr{S}(s_1,\{s_2\})=\mathscr{S}(s_2,\{s_1\})=1$, which means that $s_1$ is a subsequence of $s_2$ and conversely, $s_2$ is a subsequence of $s_1$, showing that $s_1=s_2$.
\end{itemize}
\end{flushleft}

\begin{flushleft}
\textbf{symmetry}: As $\mathscr{S}_{seq}(.,.)$ is symmetric by construction, so is $\delta_{c}(.,.)$.
\end{flushleft}

\end{proof} 

\section{Algorithmic complexity}
A suffix tree implementation of algorithm \ref{alg:Find-S-optimal} leads to a time complexity that is upper bounded by $O(k\cdot |s| \cdot log(|s|))$, where $k=c^*_S({s})$ is the size of a $S$-optimal covering for $s$. \\

The previous time complexity does not depend on $|S|$, which means that we can increase the size of $S$ without loosing on the processing time. This property is particularly important for applications for which $|S|$ is potentially large such as in plagiarism detection for instance.\\

For the pairwise distance $\delta_{c}(s_1,s_2)$, the time complexity is $O(k_1\cdot |s_1| \cdot log(|s_1|) + k_2\cdot |s_2| \cdot log(|s_2|))$ where $k_1=c^*_{\{s_2\}}({s_1})$ is the size of a $\{s_2\}$-optimal covering for $s_1$ and $k_2=c^*_{\{s_1\}}({s_2})$ is the size of a $\{s_1\}$-optimal covering for $s_2$. In comparison, the Levenshtein's distance is in $O(|s|^2)$.

\section{Examples}
We give below some examples that present the use of the covering similarity or distance for string matching and processing. A python 3 implementation available at  \url{https://github.com/pfmarteau/STree4CS} allows to play these examples. 

\subsection{Pairwise distances on strings}
Table \ref{tab:ex} presents the covering distance values obtained for some pairs of strings. As a comparative baseline, the Levenshtein's distance \cite{Levenshtein66} is also given for the same pairs of strings.

\begin{table}[!h]
\center
\resizebox{\linewidth}{!}{
\begin{tabular}{cc||cc}
\hline\hline
$string_1$ & $string_2$ & $\delta_{c}$ & Levenshtein \footnote{Levenshtein, Vladimir I. (February 1966). "Binary codes capable of correcting deletions, insertions, and reversals". Soviet Physics Doklady. 10 (8): 707–710}\\
\hline
\textbf{'amrican'} & \textbf{'american'} & .196 & \textbf{.067}\\
\textbf{'european'}& \textbf{'american'}& \textbf{.75} & .375\\
'european'& 'indoeuropean'& .167 & .25\\
\textbf{'indian'}& \textbf{'indoeuropean'}&  .5 & \textbf{.583}\\
'indian'& 'american'& 0.708 & .417\\
'narcotics' & 'narcoleptics' & .222 & .167\\
'\textbf{'little big man'}& \textbf{'big little man'}& \textbf{.143} & .286\\
\hline
\end{tabular}}
\caption{Covering and Levenshtein's distances on some pairs of strings. Min and max values for each distance are in bold fonts.}
\label{tab:ex}
\end{table}

\subsection{Detection of plagiarism}

We show in this example how the sequence covering similarity is able to detect lifted passage of an original source text spread in a plagiarized text.\\
 
This example (Example 2) is borrowed from \\
\url{https://www.princeton.edu/pr/pub/integrity/pages/plagiarism/}\\

\textbf{Original source text}\\
\textit{"From time to time this submerged or latent theater in Hamlet becomes almost overt. It is close to the surface in Hamlet’s pretense of madness, the “antic disposition” he puts on to protect himself and prevent his antagonists from plucking out the heart of his mystery. It is even closer to the surface when Hamlet enters his mother’s room and holds up, side by side, the pictures of the two kings, Old Hamlet and Claudius, and proceeds to describe for her the true nature of the choice she has made, presenting truth by means of a show. Similarly, when he leaps into the open grave at Ophelia’s funeral, ranting in high heroic terms, he is acting out for Laertes, and perhaps for himself as well, the folly of excessive, melodramatic expressions of grief."}\\

\textbf{Plagiarism: Lifting selected passages and phrases without proper acknowledgment (lifted passages are underlined)}\\ 
\textit{"Almost all of Shakespeare’s Hamlet can be understood as a play about acting and the theater. For example, in Act 1, Hamlet adopts a \ul{pretense of madness} that he uses \ul{to protect himself and prevent his antagonists from} discovering his mission to revenge his father’s murder. He also presents \ul{truth by means of a show} when he compares the portraits of Gertrude’s two husbands in order \ul{to describe for her the true nature of the choice she has made}. And when he leaps in Ophelia’s open grave \ul{ranting in high heroic terms}, Hamlet is \ul{acting out the folly of excessive, melodramatic expressions of grief}"}.\\

\noindent\texttt{Covering Distance = 0.219}\\
\texttt{Covering Similarity = 0.801}\\
\texttt{Covering = ['A', 'lmost ', 'al', 'l', ' of ', 'S', 'ha', 'k', 'es', 'pe', 'ar', 'e', '’s ', 'Hamlet ', 'c', 'an', ' be', ' u', 'nd', 'ers', 'to', 'od', ' as ', 'a ', 'pl', 'a', 'y ', 'a', 'b', 'out ', 'acting ', 'and ', 'the t', 'heater', '. ', 'F', 'or ', 'ex', 'am', 'pl', 'e, ', 'in ', 'A', 'ct ', '1', ', ', 'Hamlet a', 'd', 'op', 'ts ', 'a ', \ul{'pretense of madness'}, ' th', 'at ', 'he ', 'us', 'es ', \ul{'to protect himself and prevent his antagonists from '}, 'dis', 'co', 'ver', 'ing ', 'his m', 'is', 'sion', ' to ', 'reven', 'ge', ' his ', 'fa', 'ther’s ', 'm', 'ur', 'de', 'r', '. ', 'H', 'e a', 'l', 's', 'o pr', 'esent', 's \textcolor{red}{\ul{t}}', \ul{'ruth by means of a show'}, ' when he ', 'com', 'p', 'ar', 'es ', 'the p', 'or', 'tr', 'a', 'it', 's of ', 'G', 'ert', 'ru', 'de', '’s ', 'two ', 'h', 'us', 'b', 'and', 's in', ' or', 'de', 'r \textcolor{red}{\ul{to }}', \ul{'describe for her the true nature of the choice she has made'}, '. ', 'A', 'nd ', 'when he leaps in', ' Ophelia’s ', 'open grave ', \ul{'ranting in high heroic terms, '}, 'Hamlet ', \ul{'is acting out '}, \ul{'the folly of excessive, melodramatic expressions of grief.'}]}\\

The small differences between lifted passages that are underlined in the original text and in the optimal covering is due to the non-uniqueness of the optimal covering. A simple post-processing can easily correct these small discrepancies. We notice also that few covering substrings such as 'when he leaps in' have not been underlined in the original text.\\

Indeed, if the plagiarized text is re-written with the same text structure but different wording, then the similarity would drop, and the covering won't be so informative.

\section{Conclusion}
We have introduced the notion of sequence covering given a set of reference sequences which define a dictionary of subsequences that are used to 'optimally' cover any sequence. Originally this notion has been introduced in the context of host intrusion detection. From this notion we have defined a pairwise distance measure that can be used to compare two sequences and shown that this measure is a semimetric. As the nature of the sequence covering similarity is somehow complementary to other existing similarity defined for sequential data, one may conjecture it could help by bringing some complementary discriminant information. In particular, as efficient implementations exist using suffix trees or arrays, this similarity could bring some benefits in bioinformatics or in text processing applications.

%
%
%


\bibliographystyle{IEEEtran}
\bibliography{IEEEabrv,biblio}
%


\end{document}